\newtheorem{theorem}{Theorem}
\newtheorem{lemma}{Lemma}
\newtheorem{cor}{Corollary}
\newtheorem{prop}{Proposition}
\theoremstyle{definition}
\newtheorem{definition}{Definition}
\newcommand{\bE}{\mathbb{E}} 
\newcommand{\bY}{\mathbf{Y}} 
\newcommand{\bX}{\mathbf{X}}
\newcommand{\bS}{\mathbf{S}} 
\newcommand{\by}{\mathbf{y}} 
\newcommand{\bx}{\mathbf{x}}
\newcommand{\bs}{\mathbf{s}} 
\newcommand{\bU}{\mathbf{U}}
\newcommand{\bu}{\mathbf{u}}
\newcommand{\bA}{\mathbf{A}}
\newcommand{\bB}{\mathbf{B}}
\newcommand{\one}{\boldsymbol{1}} 
\newcommand{\rw}{\rightarrow}
\begin{document}

\title{``Compressed'' Compressed Sensing}
\author{
\authorblockN{Galen Reeves and Michael Gastpar}
\authorblockA{Department of Electrical Engineering and Computer Sciences \\
University of California, Berkeley \\
Berkeley, CA, 94720, USA \\
Email: \{greeves, gastpar\}@eecs.berkeley.edu}
}

\maketitle

\begin{abstract}

The field of compressed sensing has shown that a sparse but otherwise arbitrary vector can be recovered exactly from a small number of randomly constructed linear projections (or samples). The question addressed in this paper is whether an even smaller number of samples is sufficient when there exists prior knowledge about the distribution of the unknown vector, or when only partial recovery is needed. An information-theoretic lower bound with connections to free probability theory and an upper bound corresponding to a computationally simple thresholding estimator are derived. It is shown that in certain cases (e.g. discrete valued vectors or large distortions) the number of samples can be decreased. Interestingly though, it is also shown that in many cases no reduction is possible. 
\end{abstract}


\section{Introduction}

Suppose that an unknown vector $\bx$ of length $n$ is observed using a set of linear projections $\by = A \bx $ where $A$ is a known ${m \times n}$ sampling matrix. The field of compressed sensing (see references in \cite{riceweb}) has shown that if $\bx$ is sparse (i.e. has a relatively small number of nonzero elements) then exact recovery is possible even if the number of samples $m$ is much less than the vector length $n$. A great deal of work has considered necessary and sufficient conditions on the sampling matrix $A$ with respect to various recovery goals. In particular, much of this work has focused on sufficient conditions for computationally efficient recovery algorithms. 

Typically, the conditions on the sampling matrix are remarkably  general with respect to the unknown vector $\bx$ in the sense that they require no assumptions about the values or locations of the nonzero elements. Moreover, many of the results still apply even if $\bx$ is not actually sparse, but instead has a sparse representation with respect to a known basis. 


In many practical situations however, there exists prior knowledge about the values of the nonzero elements. In this paper, we address the extent to which this additional information allows for recovery using an even smaller number of samples than are needed in the general ``compressed sensing'' setting. We focus exclusively on recovery of the support set (i.e. the locations of the nonzero elements) in the high dimensional setting and ask the following two questions:

\begin{itemize}
\item {\em What if we consider approximate support recovery?}. In Section \ref{sec:arb} we show that if the sampling matrix $A$ is designed with knowledge of the basis in which $\bx$ is sparse, then there exists a natural tradeoff between accuracy and the number of samples. Conversely, if the sampling matrix is designed independently of the sparse basis, then no such tradeoff is possible. 
\item {\em What if $\bx$ is a random vector with a known distribution?} If the distribution is discrete, then it is straightforward to see that only one sample is needed. In Section \ref{sec:rand} we consider general distributions, 
and our main results (Theorems \ref{thm:LB} and \ref{thm:UB}) show that knowledge of the distribution may or may not decrease the number of samples that are needed depending on the desired distortion and various properties of the distribution such as the differential entropy.
\end{itemize}


An additional contribution of this paper is given by the proof of our main lower bound (Section \ref{sec:proof-LB}) which uses results from free probability theory to characterize the limiting distributions of certain random matrices that occur frequently in compressed sensing.

A number of related works have addressed various bounds on the asymptotic sampling rate needed for the noisy setting \cite{Wainwright_allerton06,Wainwright_isit07,Reeves_masters,aeron-2007,aeron-2008,akcakaya-2007,RG08,fletcher-2008,WWR08,RG-Lower-Bounds,RG-Upper-Bounds }. In these cases, it is clear that properties such as the size of the smallest nonzero values dramatically affect the number of samples that are needed. The noiseless setting addressed in the paper however, gives insight about fundamental limitations of the sampling process that cannot be overcome simply by increasing the signal to noise ratio. 









\section{Problem Setup}
We consider a generalized sparsity model where an unknown vector $\bx\in \mathbb{R}^n$ is assumed to have a sparse representation $\bu \in \mathbb{R}^n$ with respect to a known orthonormal basis $B \in \mathbb{R}^{n \times n}$ given by
\begin{align*}
\bx = B \bu.
\end{align*}
The {\em support} $\bs \subset \{1,2,\cdots,n\}$ is the set of integers indexing the nonzero elements of $\bu$,
\begin{align*}
\bs : = \{i : u_i \ne 0\},
\end{align*}
and the sparsity $k = |\bs|$ is the number of nonzero elements. 

The vector of samples $\by \in \mathbb{R}^m$ is expressed in terms of a sampling matrix $A \in \mathbb{R}^{m \times n}$:
\begin{align*}
\by = A \bx .
\end{align*}
Throughout this paper, we assume that an estimator is given the set $(\by,A,B,k)$ and the goal is to recover the support $\bs$ of the sparse representation $\bu$. The distortion between a support $\bs$ and its estimate $\hat{\bs}$ is measured using the Hamming distance
\begin{align*}
d(\bs,\hat{\bs}) := |\bs \cup \hat{\bs}| - |\bs \cap \hat{\bs}|.
\end{align*}

This paper focuses on whether or not a given recovery task is possible using an $m \times n$ sampling matrix $A$. One possible requirement is that $A$ be good uniformly for all possible $k$-sparse vectors. However, this paper considers a less stringent requirement and instead asks if there exists a distribution $p_A$ such that recovery is possible, with high probability,  for any $k$-sparse vector $\bu$ when $\bA \sim p_A$ is a {\em random} matrix drawn independently of $\bu$, and possibly also $B$. 

To highlight the difference between the above requirements it is useful to consider the task of exact recovery. Then, it can be shown that there exists a sampling matrix $A$ satisfying the first requirement if and only if $m \ge \min(2k,n)$, whereas there exists a distribution $p_A$ satisfying the second requirement if and only if $m \ge \min(k+1,n)$. 

To characterize the number of samples that are needed, we focus on the high dimensional setting where the vector length $n$ becomes large. We assume that for each $n$, the sparsity is given by $k_n =  \lfloor \Omega \cdot n \rfloor$ for some known {\em sparsity rate} $\Omega \in (0,1)$. The following definitions are used to characterize the asymptotic {\em sampling rate} given by  $\rho =m_n/n$.

\begin{definition} 
The {\em general source} $\mathcal{X}^n(\Omega)$ outputs an arbitrary (non-random) vector $\bx \in \mathbb{R}^n$ and basis $B\in \mathbb{R}^{n \times n}$ where $\bx = B \bu$ for some vector $\bu \in \mathbb{R}^n$ whose support $\bs$ has size $k=\lfloor \Omega \cdot n\rfloor$. 
\end{definition}
 
Given any support estimator $\hat{\bs}(\by,A,B,k)$ and any distribution $p_A$, the probability that the fraction of errors exceeds the normalized distortion $\alpha \in [0,1]$ for the general source $\mathcal{X}^n(\Omega)$ is given by
\begin{align*}
P_e^{(n)} =  \inf_{(\bx,B) \in \mathcal{X}^n(\Omega)}\Pr \!\Big\{ d\big(\bs,\hat{\bs}(\by,\bA,B,k),\big) > \alpha \cdot k \Big\}.
\end{align*}

\begin{definition}\label{def:rho_a}
A sampling rate distortion pair $(\rho,\alpha)$ is said to be {\em achievable} for a source $\mathcal{X}$ if for each integer $n$ there exists an estimator $\hat{\bs}(\by,A,B,k)$ and a distribution $p_A$ on a $\lceil \rho \cdot n \rceil \times n$ sampling matrix such that
%
%
\begin{align*}
P_e^{(n)} \rw 0 \quad \text{as} \quad n \rw \infty.
\end{align*}
The {\em sampling rate distortion function} $\rho(\alpha)$ is the infimum of rates $\rho \ge 0$ such that the pair $(\rho,\alpha)$ is achievable.
\end{definition}

\section{Arbitrary Signals}\label{sec:arb}

This section considers the sampling rate distortion function $\rho(\alpha)$ of the general source $\mathcal{X}(\Omega,F)$ for two different restrictions on the sampling matrix.

\begin{definition}
A random sampling matrix $\bA$ is said to be {\em universal} if it is drawn independently of the basis $B$, and {\em basis-specific} otherwise. 
\end{definition}

One useful property of a universal sampling matrix is that the sampling matrix can be constructed without knowledge of the sparse basis. Recovery with respect to a basis-specific matrix, however,  is equivalent to assuming the the basis is the identity matrix (i.e. $B=I$) since any target matrix $\bA_0$ designed for this setting can be applied to a general basis $B$ by using the sampling matrix $\bA = \bA_0 B^{-1}$. The following result shows that the universal and basis-specific settings are the same when exact recovery is required but significantly different when a nonzero distortion is allowed. 



%





\begin{prop}\label{prop:arb}The sampling rate distortion function $\rho(\alpha)$ of the general source $\mathcal{X}(\Omega)$ is given by
\begin{align}
\rho(\alpha) =
\begin{cases}
\Omega, & \text{\normalfont if $\bA \sim p_{A}$ is universal} \\
(1-\alpha)\Omega, & \text{\normalfont if $\bA  \sim p_{A|B}$  is basis-specific}\\
\end{cases}
\end{align}
for $\alpha < 1$ and is equal to zero otherwise. 
\end{prop}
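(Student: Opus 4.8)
The plan is to establish the two ``$\le$'' directions by explicit schemes and the two ``$\ge$'' directions by a randomized-adversary argument, the latter being where I expect the real work. Two reductions come first. A basis-specific scheme may assume $B=I$, since a matrix $A_0$ built for the identity basis is applied to a general $B$ via $\bA=A_0B^{-1}$ (as noted above); against a universal scheme, which uses $\bA$ independent of $B$, the adversary is free to take $B$ Haar on the orthogonal group, which makes the effective sensing operator $\bA B$ rotationally invariant. I will also use the quoted fact that a Gaussian matrix with $\kappa+1$ rows together with $\ell_0$/enumeration decoding recovers every $\kappa$-sparse vector exactly, and the observation that such a matrix is universal (an i.i.d.\ Gaussian matrix times a fixed orthogonal matrix is again i.i.d.\ Gaussian). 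Finally, when $\Omega\ge 1/(1+\alpha)$ (and at $\alpha=1$) the trivial estimate $\hat{\bs}=\{1,\dots,n\}$ already has $d(\bs,\hat{\bs})=n-k\le\alpha k$, so $\rho(\alpha)=0$; assume $\Omega<1/(1+\alpha)$ from now on.

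\emph{Achievability.} For the universal bound $\rho(\alpha)\le\Omega$, take $m=k+1$ Gaussian rows and $\ell_0$ decoding: this recovers $\bs$ exactly for every source and every $B$, giving rate $k/n\to\Omega$, and it also covers $\alpha=0$ in the basis-specific case. For the basis-specific bound $\rho(\alpha)\le(1-\alpha)\Omega$ with $\alpha>0$, fix small $\epsilon>0$, draw a uniformly random coordinate set $T$ with $|T|=\lceil(1-\alpha+\epsilon)n\rceil$, and take $\bA=A_0P_T$, where $P_T$ restricts a vector to its coordinates in $T$ and $A_0$ is a $\lceil\rho n\rceil\times|T|$ Gaussian matrix with $\rho$ slightly above $(1-\alpha)\Omega$; then $\by=A_0(\bu|_T)$. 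Since $T$ is independent of $\bs$, the hypergeometric quantity $|\bs\cap T|$ concentrates around $(1-\alpha+\epsilon)k$, so with high probability $|\bs\cap T|<\lceil\rho n\rceil$ (choose $\rho$ to allow this) and $\bu|_T$, which is $|\bs\cap T|$-sparse, is recovered exactly; output $\hat{\bs}=\bs\cap T$. Then $d(\bs,\hat{\bs})=|\bs\setminus T|=k-|\bs\cap T|\le\alpha k$ with high probability, and letting $\epsilon\downarrow0$ gives the bound.

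\emph{Converses.} Fix any admissible $p_A$ and decoder $\hat{\bs}(\by,\bA,B)$, and run the randomized adversary: support $\bs$ uniform among $k$-subsets, nonzero entries i.i.d.\ from a continuous law, and---in the universal case---$B$ Haar; averaging over this prior lower-bounds the worst-case $P_e^{(n)}$. A correct decoding forces $|\bs\cap\hat{\bs}|\ge(1-\alpha)k$ (up to the false-positive slack), so by Markov's inequality it suffices to bound $\bE\,|\bs\cap\hat{\bs}|$ below $(1-\alpha)\Omega n$ by a constant multiple of $n$; equivalently, the posterior of $\bs$ given the decoder's observation must not concentrate (with probability $1-o(1)$) on a Hamming ball of radius $\alpha k$. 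The universal case is the more tractable one: by Haar invariance the law of $B\bu$, hence of $\by=\bA B\bu$, given $\bs$ is the same for every $\bs$, so the sample is marginally uninformative about $\bs$; the only remaining dependence enters through the reduced matrices $\bA B_\bs B_\bs^{\top}\bA^{\top}$, and a crude bound on their conditioning (uniform over the $\binom{n}{k}$ supports) already keeps $\bE\,|\bs\cap\hat{\bs}|$ below $(1-\alpha)\Omega n$ in the assumed range. The basis-specific converse is the hard part, because the scheme may align its rows with the coordinates exactly as in the construction above, and then the sample is genuinely informative. The skeleton is a packing bound: for fixed $\bA$, $\Pr(\mathrm{success}\mid\bA)=\binom{n}{k}^{-1}\sum_{\bs}\mathcal N\!\big(0,\bA_\bs\bA_\bs^{\top}\big)\!\big(G_\bs\big)$, where the decision regions $G_\bs=\{y:d(\bs,\hat{\bs}(y,\bA))\le\alpha k\}$ cover the sample space with multiplicity at most the relevant Hamming-ball count, which is subexponentially smaller than $\binom{n}{k}$; turning this into a bound with a \emph{constant} gap requires controlling how strongly the Gaussian laws $\mathcal N(0,\bA_\bs\bA_\bs^{\top})$ can differ as $\bs$ ranges over the $k$-subsets of an $m$-row matrix---a random-matrix question in the same circle as Theorems~\ref{thm:LB}--\ref{thm:UB}---with the quantitative point being that these laws cannot separate $k$-subsets finely enough to defeat the packing bound until $m$ reaches $(1-\alpha)k$. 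Making that separation estimate uniform over all admissible $p_A$ is the step I expect to require the most care.
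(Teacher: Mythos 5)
Your two achievability arguments are correct and are essentially the paper's own ``rate sharing'' sketch: for the basis-specific case you ignore a random fraction of the coordinates (equivalently, zero out the corresponding columns of $AB$) and exactly recover the restricted, less-sparse vector with roughly $|\bs\cap T|+1$ Gaussian rows, and for the universal case you use the standard $m\ge k+1$ exact-recovery fact. (Your carve-out of the regime $\Omega\ge 1/(1+\alpha)$ via the trivial estimate $\hat{\bs}=\{1,\dots,n\}$ is a fair observation --- as literally stated the proposition needs an implicit restriction, e.g.\ $|\hat{\bs}|=k$ or a bound on $\Omega$ --- but it is a side issue.) The genuine gap is that neither converse is proved, and the converses are the substance of the proposition; the paper itself only asserts them and defers the full argument to \cite{RG-Lower-Bounds}.

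Concretely: for the universal converse ($\rho(\alpha)=\Omega$ for all $\alpha<1$), your central claim --- that with $\bB$ Haar the law of $\by=\bA\bB\bU$ given $\bS$ is the same for every support, so the sample is ``marginally uninformative'' --- does not engage the actual problem, because the estimator is handed $\bB$; only the \emph{design} of $\bA$ must be independent of $\bB$. Conditioned on $\bB$ the observation is very informative about $\bS$ (exact recovery is possible at any rate above $\Omega$, and $I(\bA\bX;\bS|\bB)$ grows linearly in $n$, which is exactly what the proof of Theorem~\ref{thm:LB} computes), so marginal uninformativeness cannot yield the bound. The follow-up sentence that ``a crude bound on the conditioning of $\bA\bB_\bS\bB_\bS^{\top}\bA^{\top}$ already keeps $\bE\,|\bs\cap\hat{\bs}|$ below $(1-\alpha)\Omega n$'' is an assertion, not an argument: no such bound is stated, and proving that no decoder with knowledge of $\bB$ and $\bA\bB$ can align more than a $(1-\alpha)$-fraction of the support whenever $\rho<\Omega$ is precisely the hard step (and it must hold for \emph{every} $\alpha<1$, so a bound that degrades with $\alpha$, like your basis-specific packing skeleton, cannot suffice here --- the whole point of the proposition is that the universal converse does not weaken as $\alpha$ grows). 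For the basis-specific converse ($\rho\ge(1-\alpha)\Omega$) you explicitly leave the key step open (uniform control, over all admissible $p_A$, of how finely the conditional laws of $\by$ can separate $k$-subsets), so that direction is also incomplete by your own account. In short: achievability matches the paper; both lower bounds are missing, and the universal one rests on a misreading of what ``universal'' constrains.
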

\begin{proof}[Proof Sketch]
If the basis is known, then a ``rate sharing'' strategy may be employed to convexify the achievable rate distortion region. Roughly speaking, this corresponds to ignoring some randomly chosen subset of the elements of $\bu$ by placing zeros in the corresponding columns of the matrix $AB$. In the universal setting, however, this strategy is not possible. A full proof is given in \cite{RG-Lower-Bounds}.
\end{proof}




\begin{figure}[htbp]
\centering
\psfrag{x1}[]{ \small  $\alpha$}
\psfrag{y1}[B]{\small $\rho(\alpha)/\Omega$} 
\psfrag{t1}[B]{\small }
\psfrag{label two}[b]{\small $\qquad \qquad$ Basis-Specific: $\bA \sim p_{A|B}$}
\psfrag{label one}[b]{\small Universal: $\bA \sim p_{A}$}
\vspace{-.2cm}
\epsfig{file=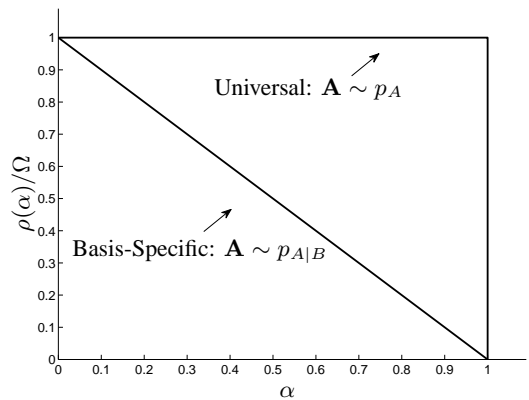, width = .79\columnwidth}
\vspace{-.2cm}
\caption{Comparison of the normalized sampling rate distortion function $\rho(\alpha)/\Omega$ of the general source $\mathcal{X}(\Omega)$ as a function of the distortion $\alpha$ for the universal and basis-specific settings.
}
\label{fig:arbX}
\end{figure}


\section{Random Signals}\label{sec:rand}

So far, we have considered the recovery of arbitrary vectors and the results have been mostly algebraic. In this section, we consider recovery of random vectors. We focus exclusively on the universal setting where the sampling matrix $\bA$ must be designed independently of the sparse basis $B$.

\begin{definition} 
The {\em random source} $\mathcal{X}^n(\Omega,F)$ outputs a random vector $\bX \in \mathbb{R}^n$ and basis $\bB \in \mathbb{R}^{n \times n}$ where $\bX = \bB \bU$ for a random vector $\bU \in \mathbb{R}^n$ whose support $\bS$ is distributed uniformly over all possibilities of size $k=\lfloor \Omega \cdot n\rfloor$ and whose nonzero elements $\{ U_i : i \in \bS\}$ are i.i.d. $\sim F$. The basis $\bB$ is distributed uniformly over the set of all orthonormal matrices and is independent of $\bU$. 
\end{definition}

We assume throughout that $F$ denotes the distribution of a real valued random variable with finite power and zero probably mass at zero. Also, the definitions of achievability are the same as for the general source, except that the probability of error is taken with respect to the random vector $\bX$ and random basis $\bB$,
\begin{align}
P_e^{(n)} = \Pr \!\Big\{ d\big(\bS,\hat{\bs}(\bY,\bA,\bB,k)\big) > \alpha \cdot k \Big\}.\label{eq:PeRandom}
\end{align}


\subsection{Lower Bounds}
This section gives an information theoretic lower bound on the sampling rate distortion function $\rho(\alpha)$ of a random source $\mathcal{X}(\Omega,F)$. To begin, we note that in some cases, the constraints imposed by the distribution $F$ significantly alter the nature of the estimation problem.

\begin{prop}[Discrete Signals]\label{ex:discrete}
Suppose that the distribution $F$ is supported on a discrete and finite set $\Sigma \subset \mathbb{R}\backslash\{0\}$. Then, only $m=1$ sample is sufficient for exact recovery, and the sampling rate distortion function $\rho(\alpha)$ of the random source $\mathcal{X}(\Omega,F)$ is $\rho(\alpha) = 0$ for all $\alpha$.
\end{prop}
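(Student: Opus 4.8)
The plan is to prove the stronger claim that a single linear measurement ($m=1$) already suffices for \emph{exact} support recovery; the statement $\rho(\alpha)=0$ for every $\alpha$ then follows at once. Indeed, given any target rate $\rho>0$, one may take the $\lceil \rho n\rceil \times n$ sampling matrix to have a single ``active'' row and $\lceil \rho n\rceil-1$ rows identically zero. Since $1/n \rw 0$, this makes every $\rho>0$ achievable --- in fact with $P_e^{(n)}=0$ for every $n$ --- so the infimum in Definition~\ref{def:rho_a} is $0$.

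First I would construct the single measurement. In the universal setting the row $\ba\in\mathbb{R}^n$ is drawn independently of $\bB$, so I would take $\ba\sim\mathcal{N}(0,I_n)$. The estimator, which by assumption knows $(\by,\bA,\bB,k)$, forms $\bv:=\bB^{T}\ba$, so the scalar observation is $y=\ba^{T}\bx=\ba^{T}\bB\bU=\langle\bv,\bU\rangle$. Let $\mathcal{U}_n$ be the \emph{finite} set of all $\bu\in\mathbb{R}^n$ whose support has size $k$ and whose nonzero entries lie in $\Sigma$. The decoder outputs the support of the unique $\bu\in\mathcal{U}_n$ satisfying $\langle\bv,\bu\rangle=y$ (and an arbitrary support if uniqueness fails). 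Since the true $\bU$ lies in $\mathcal{U}_n$, this decoder is correct whenever the map $\bu\mapsto\langle\bv,\bu\rangle$ is injective on $\mathcal{U}_n$.

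Next I would verify that injectivity holds almost surely. For any two distinct $\bu,\bu'\in\mathcal{U}_n$ the difference $\bu-\bu'$ is nonzero, so $\{\bv:\langle\bv,\bu-\bu'\rangle=0\}$ is a proper hyperplane in $\mathbb{R}^n$ and hence has Lebesgue measure zero; as $\mathcal{U}_n$ is finite, the union over all such pairs is still a measure-zero set. Because $\bB$ is orthonormal, $\bv=\bB^{T}\ba\sim\mathcal{N}(0,I_n)$ for every value of $\bB$ (and more generally $\bv$ has a density whenever $\ba$ does, since $\bB^{T}$ is invertible), so $\bv$ avoids this bad set with probability one. Hence $P_e^{(n)}=0$ for every $n$, and $(\rho,\alpha)$ is achievable for every $\rho>0$ and every $\alpha\in[0,1]$, giving $\rho(\alpha)=0$.

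There is no genuinely hard step here; the two points that require a little care are (i) that we are in the universal setting, so $\ba$ may not depend on $\bB$ --- this is handled by the rotation invariance of the isotropic Gaussian, which keeps the distribution of $\bv$ fixed and hence makes the measure-zero argument uniform over $\bB$ --- and (ii) that the proposition only asserts how few samples suffice and says nothing about computational cost, so the exhaustive search over the finite set $\mathcal{U}_n$ built into the decoder is permissible.
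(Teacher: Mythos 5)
Your proof is correct and follows essentially the same route as the paper's: a single row with continuously distributed entries almost surely maps the finitely many candidate $k$-sparse vectors with entries in $\Sigma$ to distinct values, so exhaustive-search decoding recovers the support exactly and $\rho(\alpha)=0$. You simply fill in details the paper leaves implicit (the hyperplane measure-zero union bound, uniformity over $\bB$ via rotation invariance, and padding with zero rows to meet the $\lceil \rho n\rceil$-row format).
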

\begin{proof}
Suppose that $\bA$ is an $1 \times n$ ``matrix'' whose elements are drawn i.i.d. from continuous distribution with finite power. Then, with probability one, the projection $\bu \mapsto \bA\bB \bu$ maps each of the ${n \choose k}k^{|\Sigma|} $ possible realizations of $\bu$ to a unique real number. 
\end{proof}

The fact that only one sample is needed for discrete distributions is not due to the sparsity in the problem (after all, the result does not depend on the sparsity rate $\Omega$) 
and Proposition \ref{ex:discrete} provides little insight into cases where the unknown signal may have a density. To address these cases, we introduce the following property of a random signal source.

\begin{definition}\label{def:Theta}
Given any distribution $F$ with a density and any sparsity rate $\Omega$ the function $\theta(\Omega,F) \in [0,1]$ is given by
\begin{align}
\theta(\Omega,F) & = \frac{ (2\pi e)^{-1} \exp(2 h(F))}{\sigma_F^2 + (1- \Omega)\,  \mu_F^2},
\end{align}
where $\mu_F$, $\sigma_F^2$, and $h(F)$ denote the mean, variance and differential entropy of the distribution $F$. If $F$ does not have a density, then $\theta(\Omega,F) \equiv 0$. 
\end{definition}

The property $\theta(\Omega,F)$ is the normalized entropy power of the nonzero elements and is equal to one if and only if $F$ is a zero mean Gaussian distribution. Roughly speaking, one may interpret $\theta(\Omega,F)$ as the relative ``distance'' between a random source $\mathcal{X}(\Omega,F)$ and a discrete source. The following result, which is proved in Section \ref{sec:proof-LB}, uses this property to lower bound the sampling rate distortion function.

\begin{theorem}[Lower Bound]\label{thm:LB}
A sampling rate distortion pair $(\rho,\alpha)$ is not achievable for the random source $\mathcal{X}(\Omega,F)$  if $\rho < \Omega$ and
\begin{align}
\frac{\rho}{2} \log\left( \frac{1}{\theta(\Omega,F)} \cdot \frac{\Delta(\rho)}{\Delta(\rho/\Omega)}\right) < H(\Omega) - H(\alpha \Omega)
\end{align}
where $\theta(\Omega,F)$ is given by Definition \ref{def:Theta}, $H(p) = -p\log(p) - (1-p)\log(1-p)$ is binary entropy and 
\begin{align}
\Delta(r) &=
\begin{cases}
\left({1-r}\right)^{1-1/r}  &\text{\normalfont if}\; r<1\\
 1&\text{\normalfont if}\; r = 1
\end{cases}. \label{eq:Delta}
\end{align}
\end{theorem}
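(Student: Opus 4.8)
The plan is to convert this into a rate-distortion style converse using Fano's inequality, where the "message" is the support $\bS$ and the "channel output" is the pair $(\bY,\bA,\bB)$. First I would lower-bound the number of support estimates needed to achieve distortion $\alpha\Omega k$: by a standard sphere-covering argument on the Hamming cube, any estimator that achieves average Hamming distortion below $\alpha\Omega k$ with vanishing error probability must effectively resolve $\log\binom{n}{k} - \log\binom{n}{\alpha k} + o(n) = n\left(H(\Omega) - H(\alpha\Omega)\right) + o(n)$ nats of information about $\bS$ (using $\log\binom{n}{pn} = nH(p) + o(n)$). Thus a necessary condition for achievability is $I(\bS;\bY,\bA,\bB) \ge n\left(H(\Omega)-H(\alpha\Omega)\right) - o(n)$. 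Since $\bA$ and $\bB$ are independent of $\bS$, this is $I(\bS;\bY\mid\bA,\bB)$, and the whole game becomes upper-bounding this conditional mutual information by something of the form $\tfrac{m}{2}\log(\cdots) = \tfrac{\rho n}{2}\log(\cdots)$.

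Next I would bound $I(\bS;\bY\mid\bA,\bB)$ by relating it to a differential-entropy difference. Conditioned on the support $\bS=\bs$, the vector $\bY = \bA\bB\bU$ is a linear image of the $k$ i.i.d.\ nonzero coordinates of $\bU$; so $h(\bY\mid\bS,\bA,\bB)$ is the expected log-volume of $\bA\bB$ restricted to the $k$ active columns, which by a Gaussian-type entropy-power / determinant computation contributes the $h(F)$ term (hence the $\theta(\Omega,F)$ factor, which is exactly the entropy power of $F$ normalized by the second moment $\sigma_F^2 + (1-\Omega)\mu_F^2$ appearing in the covariance of $\bX$). The unconditioned term $h(\bY\mid\bA,\bB)$ I would bound above by a Gaussian with the same covariance $\mathbb{E}[\bX\bX^T]$, so the mutual information is at most $\tfrac12\mathbb{E}\log\det$ of a ratio of two random matrices: one built from $m$ rows and all $n$ columns of $\bA\bB$ (weighted by the average power), the other from $m$ rows and $k$ columns. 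This is where the claimed expression comes from — $\Delta(\rho)$ should be the limiting value of $\tfrac1n\log\det$ for the $m\times n$ aspect ratio $\rho$, and $\Delta(\rho/\Omega)$ the analogous quantity for the $m\times k$ problem with aspect ratio $m/k = \rho/\Omega$.

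The main obstacle — and the genuinely nontrivial step — is evaluating those two $\log$-determinant limits. The relevant random matrices are $\bA\bA^T$-type Wishart/Wachter matrices whose aspect ratios are $\rho$ and $\rho/\Omega$; the quantity $\tfrac1n\mathbb{E}\log\det$ converges by the Marchenko–Pastur / free-probability machinery to $\int \log(x)\,d\mu(x)$ against the appropriate limiting spectral distribution, and one must check that this integral equals $\log\Delta(r)$ with $\Delta$ as defined in \eqref{eq:Delta}. I would invoke the free-probability results promised in Section~\ref{sec:proof-LB} to identify the limiting distribution (accounting for the orthogonal random basis $\bB$, which is what makes the $\bA\bB$ columns behave like an isotropic random frame) and then do the one-dimensional integral; the boundary case $r=1$ (square matrix, $\Delta=1$) and the regime $\rho<\Omega$ (which forces $\rho/\Omega<1$ so the $m\times k$ matrix is fat) need to be handled, which is presumably why the hypothesis $\rho<\Omega$ appears in the statement. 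Finally I would assemble: plug the two limits into the mutual-information bound, divide by $n$, take $n\to\infty$, and contrapose — if the displayed inequality holds then $I(\bS;\bY\mid\bA,\bB)/n$ is asymptotically strictly below $H(\Omega)-H(\alpha\Omega)$, so Fano forces $P_e^{(n)}\not\to 0$, i.e.\ $(\rho,\alpha)$ is not achievable.
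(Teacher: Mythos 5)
Your overall architecture matches the paper's proof almost step for step: a Fano/counting converse targeting $H(\Omega)-H(\alpha\Omega)$, a Gaussian maximum-entropy upper bound on $h(\bY\mid \bA,\bB)$, an entropy-power-inequality lower bound on $h(\bY\mid \bS,\bA,\bB)$ that produces the $\theta(\Omega,F)$ factor, and an evaluation of the two normalized log-determinants via Marchenko--Pastur asymptotics yielding $\Delta(\rho)$ and $\Delta(\rho/\Omega)$ (including your correct observation that $\rho<\Omega$ is what makes the $m\times k$ problem a fat matrix with aspect ratio $\rho/\Omega<1$).

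There is, however, one genuine missing idea. The theorem quantifies over \emph{every} sampling distribution $p_A$, but your evaluation of the two determinant terms treats $\bA\bA^T$ and its $m\times k$ compression as Wishart-type matrices with Marchenko--Pastur limiting spectra, which holds only for special ensembles such as i.i.d.\ Gaussian $\bA$. For a general $p_A$ --- say $\bA$ with orthonormal rows, so $|\bA\bA^T|=1$ --- the numerator limit is not the MP value, and your bound as written does not apply. What rescues the argument, and what the paper isolates in its first lemma, is an invariance step: because $\bB$ is Haar-distributed, any two full-rank $m\times n$ matrices $A$ and $A'$ satisfy $A'\bX \stackrel{d}{=} D A\bX$ for an invertible $D$ (write both in SVD form and absorb the orthogonal right factor into $\bB$), so $I(A\bX;\bS\mid\bB)$ depends only on the rank of $A$; equivalently, the ratio $|AA^T|^{1/m}\big/|A\bB_\bS\bB_\bS^TA^T|^{1/m}$ is equal in distribution for all full-rank $A$, since $|D|^{2/m}$ cancels between numerator and denominator. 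Only after this reduction may one compute ``without loss of generality'' with a convenient spectrum (the paper carries a general limiting measure $\mu$ through the free-probability compression lemma and then specializes to Marchenko--Pastur). Your parenthetical that the Haar basis makes the columns of $\bA\bB$ behave like an isotropic frame is pointing at exactly this, but it must be promoted from a remark to the lemma that licenses the Wishart computation; without it your argument proves the converse only for Gaussian-like sampling matrices rather than for all $p_A$.
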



One consequence of Theorem \ref{thm:LB}, is that there is a simple test to see whether or not the sampling rate needed for a random source $\mathcal{X}(\Omega,F)$ is any less than that needed for the general source $\mathcal{X}(\Omega)$.

\begin{cor}[Theorem \ref{thm:LB}] The sampling rate distortion function $\rho(\alpha)$ of the random source $\mathcal{X}(\Omega,F)$ is given by $\rho(\alpha) = \Omega$ for all $\alpha<1$ such that
\begin{align}\label{eq:corthm1}
\theta(\Omega,F) > \Delta(\Omega) \exp\left(-{ \textstyle \frac{2}{\Omega}} \big[ H(\Omega) - H(\alpha \Omega)\big]\right).
\end{align}
\end{cor}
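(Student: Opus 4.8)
The plan is to combine the nearly trivial upper bound $\rho(\alpha)\le\Omega$ with a matching lower bound obtained by verifying that the hypothesis of Theorem~\ref{thm:LB} holds for \emph{every} $\rho<\Omega$. The upper bound is immediate: the random source $\mathcal{X}(\Omega,F)$ is dominated by the general source $\mathcal{X}(\Omega)$, since any universal sampling scheme that recovers every $k$-sparse signal to within distortion $\alpha k$ in the worst case over $(\bx,B)$ still does so after averaging over the random signal $\bX=\bB\bU$ and random basis $\bB$; hence Proposition~\ref{prop:arb} gives $\rho(\alpha)\le\Omega$ for all $\alpha<1$.

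For the lower bound, fix $\alpha<1$ satisfying \eqref{eq:corthm1} and write $\theta=\theta(\Omega,F)$, which must be positive (otherwise the right-hand side of \eqref{eq:corthm1} is positive while $\theta=0$), so $F$ has a density and $\theta\in(0,1]$. Define, for $\rho\in(0,\Omega)$,
\[
g(\rho)\;\defeq\;\frac{\rho}{2}\log\!\left(\frac{1}{\theta}\cdot\frac{\Delta(\rho)}{\Delta(\rho/\Omega)}\right),
\]
so that Theorem~\ref{thm:LB} rules out achievability of $(\rho,\alpha)$ whenever $\rho<\Omega$ and $g(\rho)<H(\Omega)-H(\alpha\Omega)$. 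Since $g(0^{+})=0$ the case $\rho=0$ is covered as well, so it suffices to show $g(\rho)<H(\Omega)-H(\alpha\Omega)$ for all $\rho\in(0,\Omega)$; then no $\rho<\Omega$ is achievable and $\rho(\alpha)\ge\Omega$. Writing $L(r)=\log\Delta(r)=(1-1/r)\log(1-r)$, the identity $L(1)=0$ together with continuity of $\Delta$ at $1$ gives $\lim_{\rho\to\Omega^{-}}g(\rho)=\frac{\Omega}{2}\log\!\big(\Delta(\Omega)/\theta\big)$, and a one-line rearrangement shows that \eqref{eq:corthm1} is exactly the statement $H(\Omega)-H(\alpha\Omega)>\frac{\Omega}{2}\log\!\big(\Delta(\Omega)/\theta\big)$ (in particular this forces $H(\Omega)>H(\alpha\Omega)$, since $\Delta(\Omega)>1\ge\theta$, so the uninteresting regime $H(\Omega)\le H(\alpha\Omega)$ never arises here). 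Hence it is enough to prove that $g$ is \emph{strictly increasing} on $(0,\Omega)$, for then $g(\rho)<\lim_{\rho\to\Omega^{-}}g(\rho)<H(\Omega)-H(\alpha\Omega)$ for every $\rho\in(0,\Omega)$.

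Monotonicity of $g$ is the one real computation, and is the step I expect to be the main obstacle. Substituting $\rho=\Omega r$, write $g(\Omega r)=\frac{\Omega r}{2}B(r)$ with $B(r)=-\log\theta+L(\Omega r)-L(r)$. Using $L'(r)=\big(\log(1-r)+r\big)/r^{2}$ one finds $r\,\frac{d}{dr}B(r)=\phi(\Omega r)-\phi(r)$, where $\phi(t)=\big(\log(1-t)+t\big)/t=1+\log(1-t)/t$. Two elementary monotonicity facts then finish the job, both consequences of the inequality $\log(1-t)<-t$ on $(0,1)$: (i) $L$ is strictly decreasing on $(0,1)$, so $B(r)=-\log\theta+\big(L(\Omega r)-L(r)\big)>0$ (using $\Omega r<r$ and $\theta\le1$); and (ii) $\phi$ is strictly decreasing on $(0,1)$, so $r\,B'(r)=\phi(\Omega r)-\phi(r)>0$. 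Therefore $\frac{d}{dr}g(\Omega r)=\frac{\Omega}{2}\big(B(r)+rB'(r)\big)>0$, as needed. The delicate point within this step is the endpoint behavior of $\Delta(r)=(1-r)^{1-1/r}$ at $r=1$, a $0^{0}$ indeterminate form that extends continuously with $L(r)\to0$; this is what makes the limit $\lim_{\rho\to\Omega^{-}}g(\rho)$ well defined and equal to the claimed value. Combining the two directions yields $\rho(\alpha)=\Omega$.
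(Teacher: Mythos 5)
Your proposal is correct and fills in exactly the derivation the paper leaves implicit: the upper bound $\rho(\alpha)\le\Omega$ from Proposition~\ref{prop:arb} (equivalently Theorem~\ref{thm:UB}), combined with Theorem~\ref{thm:LB} applied to every $\rho<\Omega$, where you correctly identify condition \eqref{eq:corthm1} as the statement $g(\Omega^-)<H(\Omega)-H(\alpha\Omega)$ and then verify by the monotonicity of $g$ on $(0,\Omega)$ that $\rho\to\Omega^-$ is the binding case. The calculus checks out ($L'(r)=(\log(1-r)+r)/r^{2}$, $rB'(r)=\phi(\Omega r)-\phi(r)$, $B>0$ since $\theta\le 1$ and $L$ decreases, and $\phi$ strictly decreasing --- though for that last fact the elementary inequality actually needed is $-\log(1-t)<t/(1-t)$, i.e.\ $\log(1+s)<s$, rather than $\log(1-t)<-t$), so the argument is complete.
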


\subsection{Upper Bounds}
Theorem \ref{thm:LB} shows that in many cases the sampling rate distortion function of a random source is equal to that of the arbitrary source. However, if $\theta(\Omega,F)$ is less than the right hand side of \eqref{eq:corthm1}, then the lower bound in Theorem \ref{thm:LB} is less than the sparsity rate $\Omega$ and there exists a gap with the upper bound given by the arbitrary setting (Proposition \ref{prop:arb}). In this section, we investigate improved (i.e. lower) upper bounds for these settings. 

One way to upper bound $\rho(\alpha)$ is to directly analyze the estimator that minimizes the error probability $P_e^{(n)}$ given in \eqref{eq:PeRandom}. Although non-asymptotic properties of optimal estimation in the Gaussian setting have been studied (see for example \cite{RG_asilomar09}), analysis in the asymptotic setting appears to be challenging.



In this paper, we instead derive upper bounds for a computationally simple, and potentially suboptimal, estimator described below.

\begin{definition}\label{def:hypo}
Suppose that the distribution of a random variable $X$ is given by
\begin{align*}
X \sim 
\begin{cases}
W, & \text{\normalfont if $Z=0$}\\
W+\sqrt{\rho}\,U, & \text{\normalfont if $Z=1$}
\end{cases}
\end{align*}
where $U\sim F$, $W \sim \mathcal{N}(0, \Omega\, \bE[U^2] )$, and $Z \sim\text{Bernoulli}(\Omega)$ are independent. For any subset $T \subseteq \mathbb{R}$ let $\hat{Z}_T(x) = \one(x \in T)$ and define the error probability 
\begin{align}
\epsilon(\rho,\Omega,F) & = \inf_{T \subseteq \mathbb{R}} \Pr\{ \hat{Z}_T (X)\ne Z\} \label{eq:eps}.
\end{align}

\end{definition}




\begin{definition}
For a random source $\mathcal{X}(\Omega,F)$, the {\em Thresholding} (TH) estimator $\hat{\bs}_{\text{\normalfont TH}}(\by)$ is given by
\begin{align}
\hat{\bs}_{\text{\normalfont TH}}(\by)& = \big\{i \; : \;  \hat{\bu}_i \in T^* \big\}
\end{align}
where $\hat{\bu} = B^TA^T \by \in \mathbb{R}^n$ and $T^*\subseteq \mathbb{R}$ minimizes the right hand side of \eqref{eq:eps} with $\rho = m/n$.
\end{definition}

The thresholding estimate corresponds to a separate hypothesis test for each element of $\bx$ and its complexity is linear in the vector length $n$. 

\begin{prop}\label{prop:TH}
Suppose that for each integer $n$, the elements of the sampling matrix $\bA$ are i.i.d. $\sim \mathcal{N}(0,1/n)$. Then, for any random source $\mathcal{X}(\Omega,F)$ and sampling rate $\rho$, 
\begin{align*}
d(\hat{\bS}_\text{\normalfont TH}, \bS) \rw \epsilon(\rho,\Omega,F) \; \text{\normalfont in probability as $\; n \rw \infty$}
\end{align*}
where $\epsilon(\rho,\Omega,F)$ is given by \eqref{eq:eps}. 
\end{prop}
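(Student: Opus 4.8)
The plan is to show that the thresholding estimator decouples, asymptotically, into $n$ nearly independent scalar hypothesis tests, each having the statistics of the test that defines $\epsilon(\rho,\Omega,F)$ in Definition~\ref{def:hypo}, and then to argue that the resulting empirical error fraction concentrates. I would first reduce to $B=I$: since the entries of $\bA$ are i.i.d.\ $\mathcal{N}(0,1/n)$ and $\bB$ is an independent orthonormal matrix, $\bA\bB$ has the same law as $\bA$, while $\hat\bu=\bB^T\bA^T\by=(\bA\bB)^T(\bA\bB)\bu$; so it suffices to study $\hat u_i=M_{ii}u_i+\sum_{j\ne i}M_{ij}u_j$ for $\bM=\bPhi^T\bPhi$ with $\bPhi$ an $m\times n$ matrix of i.i.d.\ $\mathcal{N}(0,1/n)$ entries, $\bu$ having a uniformly random support $\bS$ of size $k=\lfloor\Omega n\rfloor$ and i.i.d.\ $\sim F$ nonzeros. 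Condition on $\bu$; by the law of large numbers over the $k$ nonzeros, $\tfrac1n\|\bu\|_2^2\to\Omega\,\bE[U^2]$ almost surely, so it suffices to work on this typical event. A union bound with a chi-square tail estimate gives $\max_i|M_{ii}-\rho|\to0$ in probability. For the cross term, write $\xi_i:=\sum_{j\ne i}M_{ij}u_j=\sum_{k=1}^m\Phi_{ki}V_k^{(i)}$ with $V_k^{(i)}:=\sum_{j\ne i}\Phi_{kj}u_j$; given $\bu$ the $V_k^{(i)}$ are i.i.d.\ $\mathcal{N}\!\big(0,\tfrac1n\|\bu\|_2^2(1+o(1))\big)$ and independent of the i.i.d.\ $\mathcal{N}(0,1/n)$ column $\{\Phi_{ki}\}_k$, so conditionally on $\bu$ and on the columns of $\bPhi$ other than~$i$, $\xi_i$ is Gaussian with variance $\tfrac1n\sum_k(V_k^{(i)})^2$, a quantity that concentrates on $\rho\,\Omega\,\bE[U^2]$. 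Since this limit does not depend on $u_i$ or on whether $i\in\bS$, one gets (up to a deterministic rescaling) that $(\hat u_i,\one(i\in\bS))$ converges in distribution to $(X,Z)$ of Definition~\ref{def:hypo}: off the support $\hat u_i$ is asymptotically $\mathcal N(0,\rho\,\Omega\,\bE[U^2])$, and on the support, with $u_i\sim F$, it is this Gaussian shifted by $\rho u_i$. As the limiting law of $X$ is absolutely continuous---a Gaussian density convolved with the law of $Z\sqrt\rho U$---the boundary of the optimal set $T^*$ is null, so $\Pr\{\one(\hat u_i\in T^*)\ne\one(i\in\bS)\}\to\Pr\{\hat Z_{T^*}(X)\ne Z\}=\epsilon(\rho,\Omega,F)$.

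It remains to upgrade this marginal convergence to concentration of the error fraction $\tfrac1n d(\hat\bS_{\textnormal{TH}},\bS)=\tfrac1n\sum_{i=1}^n\one\{\one(\hat u_i\in T^*)\ne\one(i\in\bS)\}$, the delicate point being that the coordinates $\hat u_i$ are dependent through the shared matrix $\bPhi$. I would use the second moment method: the mean of this fraction tends to $\epsilon(\rho,\Omega,F)$ by the previous step, so it suffices to show its variance tends to $0$, i.e.\ that for $i\ne j$ the pairs $(\hat u_i,\hat u_j)$ are asymptotically independent given $\bu$. Writing $C_k:=\sum_{l\ne i,j}\Phi_{kl}u_l$, one has $\xi_i=\sum_k\Phi_{ki}C_k+\sum_k\Phi_{ki}\Phi_{kj}u_j$, where the second sum has conditional variance $u_j^2\,m/n^2=o(1)$; the same holds for $\xi_j$, so each is an $o_P(1)$ remainder plus the term $\sum_k\Phi_{ki}C_k$ (resp.\ $\sum_k\Phi_{kj}C_k$), and those two terms are, conditionally on $\{C_k\}$, independent Gaussians. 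Hence $(\hat u_i,\hat u_j)$ are asymptotically independent given $\bu$; by exchangeability of the coordinates there are only finitely many pair-types to control, so the off-diagonal part of the variance is $o(1)$, and Chebyshev's inequality gives the claimed convergence in probability on the typical event for $\bu$---hence unconditionally, since that event has probability tending to one.

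The main obstacle is exactly this pairwise decoupling: one needs a bound on the dependence induced by $\bPhi^T\bPhi$ that is quantitative and uniform over coordinate pairs, so that the $\Theta(n^2)$ covariance terms still sum to something vanishing. The reduction and the per-coordinate limit are comparatively routine---this is the ``first iteration'' matched-filter picture, made rigorous by Gaussian conditioning and standard concentration---whereas the variance bound is where the real work lies.
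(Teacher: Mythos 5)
Your proposal is correct and follows essentially the same route as the paper's (sketched) proof: both reduce the matter to showing that the coordinates of $\hat{\bu}$ behave like the scalar statistic $X$ of Definition~\ref{def:hypo} (the paper defers this empirical-distribution convergence to its companion reference), and your marginal-limit plus second-moment/pairwise-decoupling argument is exactly the standard way to make that convergence of the empirical error fraction rigorous. The only caveats are cosmetic: with $\bA$ i.i.d.\ $\mathcal{N}(0,1/n)$ the limit of $\hat{u}_i$ is $\sqrt{\rho}\,X$ rather than $X$ (your ``deterministic rescaling,'' harmless since $\epsilon(\rho,\Omega,F)$ is scale-invariant), and exchangeability of the coordinates already reduces the covariance control to a single fixed pair, so the uniformity over $\Theta(n^2)$ pairs that you flag as the main obstacle is not actually an obstacle.
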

\begin{proof}[Proof Sketch] The key step, which is proved in \cite{RG-Upper-Bounds}, is to show that the empirical distributions of $\{\hat{U}_i, i \in \bS\}$ and $\{\hat{U}_i, i \notin \bS\}$ converge to the distribution of the random variable $X$ described in Definition 6 conditioned on the events $Z=1$ and $Z=0$ respectively. 
\end{proof}

Combining Propositions \ref{prop:arb} and \ref{prop:TH} gives the following result which is complementary to Theorem \ref{thm:LB}.






\begin{theorem}[Upper Bound]  \label{thm:UB}
A sampling rate distortion pair $(\rho,\alpha)$ is achievable for the random source $\mathcal{X}(\Omega,F)$ if $\rho > \Omega$ or
$\alpha\Omega > \epsilon(\rho,\Omega,F) $
where $\epsilon(\rho,\Omega,F)$ is given by \eqref{eq:eps}. 
\end{theorem}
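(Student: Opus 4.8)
The plan is to establish the two sufficient conditions of the statement separately: the condition $\rho>\Omega$ by reduction to the arbitrary‑signal result (Proposition~\ref{prop:arb}), and the condition $\alpha\Omega>\epsilon(\rho,\Omega,F)$ by analyzing the Thresholding estimator through Proposition~\ref{prop:TH}. In each case I verify Definition~\ref{def:rho_a}, i.e.\ that for every $n$ there is a distribution $p_A$ on an $\lceil\rho n\rceil\times n$ matrix and an estimator for which the random‑source error $P_e^{(n)}$ in \eqref{eq:PeRandom} tends to $0$.

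For the first condition, suppose $\rho>\Omega$. By Proposition~\ref{prop:arb}, the universal sampling rate distortion function of the general source $\mathcal{X}(\Omega)$ equals $\Omega$ for $\alpha<1$ and $0$ for $\alpha=1$, hence is at most $\Omega<\rho$ in all cases; since achievability for $\mathcal{X}(\Omega)$ is monotone in $\rho$ (extra rows of the matrix can simply be ignored), $(\rho,\alpha)$ is achievable for $\mathcal{X}(\Omega)$, so for each $n$ there exist a \emph{universal} distribution $p_A$ on $\lceil\rho n\rceil\times n$ matrices and an estimator $\hat{\bs}(\by,\bA,B,k)$ whose error probability vanishes uniformly over all $k$‑sparse inputs $(\bx,B)\in\mathcal{X}^n(\Omega)$. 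Every realization $(\bX,\bB)$ of the random source $\mathcal{X}^n(\Omega,F)$ is almost surely such an input — the support of $\bU$ has size exactly $k$ because $F$ has no mass at $0$, and $\bB$ is orthonormal — so the random‑source error \eqref{eq:PeRandom}, which merely averages the conditional error over $(\bX,\bB)$, is bounded by the same uniform bound and therefore also vanishes. Hence $(\rho,\alpha)$ is achievable.

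For the second condition, suppose $\alpha\Omega>\epsilon(\rho,\Omega,F)$. I would take $\bA$ with i.i.d.\ $\mathcal{N}(0,1/n)$ entries and $m_n=\lceil\rho n\rceil$ rows and use $\hat{\bs}_{\mathrm{TH}}$. Proposition~\ref{prop:TH} then gives that the normalized Hamming distortion $\tfrac1n\,d(\hat{\bS}_{\mathrm{TH}},\bS)$ converges in probability to $\epsilon(\rho,\Omega,F)$. Fix a constant $c$ with $\epsilon(\rho,\Omega,F)<c<\alpha\Omega$, say $c=\tfrac12\big(\epsilon(\rho,\Omega,F)+\alpha\Omega\big)$. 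Since $k=\lfloor\Omega n\rfloor$ we have $\alpha k/n\to\alpha\Omega>c$, so $\alpha k/n\ge c$ for all large $n$, whence
\[
P_e^{(n)}=\Pr\big\{d(\hat{\bS}_{\mathrm{TH}},\bS)>\alpha k\big\}\le\Pr\big\{\tfrac1n\,d(\hat{\bS}_{\mathrm{TH}},\bS)>c\big\}\longrightarrow 0,
\]
which shows $(\rho,\alpha)$ is achievable.

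The genuine content lies entirely in the cited Propositions~\ref{prop:arb} and~\ref{prop:TH}; what remains is bookkeeping, and the only place a little care is needed — hence the nearest thing to an obstacle — is the last display: one must interpose the fixed threshold $c$ strictly between $\epsilon(\rho,\Omega,F)$ and $\alpha\Omega$ in order to absorb simultaneously the strict inequality in the hypothesis and the floor in $k=\lfloor\Omega n\rfloor$, and one should note that using the rounded rate $m_n/n=\lceil\rho n\rceil/n$ in place of $\rho$ does not change the limit in Proposition~\ref{prop:TH} (it enters only through $\epsilon(\cdot,\Omega,F)$, which is continuous in its first argument).
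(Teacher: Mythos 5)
Your proposal is correct and follows exactly the route the paper takes: the paper's proof of Theorem~\ref{thm:UB} is precisely ``combine Proposition~\ref{prop:arb} (for the regime $\rho>\Omega$, via universality of the general-source guarantee over the almost-surely valid realizations of the random source) with Proposition~\ref{prop:TH} (for $\alpha\Omega>\epsilon(\rho,\Omega,F)$, via the Thresholding estimator),'' and your write-up simply supplies the bookkeeping the paper leaves implicit. The care you take in interposing a fixed threshold between $\epsilon(\rho,\Omega,F)$ and $\alpha\Omega$ to handle the floor in $k=\lfloor\Omega n\rfloor$ and the rounded rate $\lceil\rho n\rceil/n$ is a welcome, if minor, addition to the paper's one-line argument.
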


\subsection{A Gaussian Example}

This section illustrates the bounds in Theorems \ref{thm:LB} and \ref{thm:UB} for a random source $\mathcal{X}(\Omega,F)$ where $F$ is a Gaussian distribution with mean $\mu$ and variance $1-\mu^2$. In Figure \ref{fig:EX1},  the normalized sampling rate distortion function $\rho(\alpha)/\Omega$ is plotted as a funciton of the mean $\mu$  for  $\alpha = 0.3$. It is shown that if $\mu \le \mu^* \approx 0.83$ then the number of samples needed is no different than for the arbitrary source $\mathcal{X}(\Omega)$.  signals. However, if $\mu > \mu^*$, there exists a gap between the bounds. 

In Figure \ref{fig:EX2}, the same bounds are shown for the relatively large distortion $\alpha = 0.95$. In this case, the upper bound from Theorem \ref{thm:UB} is less than the rate needed for the arbitrary source, which verifies that, in some cases, there is a reduction in the number of samples that are needed. We note that the special case $\mu=1$ corresponds to a discrete distribution, and thus $\rho(\alpha)=0$ by Proposition \ref{ex:discrete}.

\begin{figure}[htbp]
\centering
\psfrag{x1}[]{\small $\mu$}
\psfrag{y1}[B]{\small  $\rho/\Omega$} 
\psfrag{t1}[Bt]{\small Moderate Distortion ($\alpha = 0.3$)}
\psfrag{(This Paper)}[lT]{\small (This Paper)}
\epsfig{file=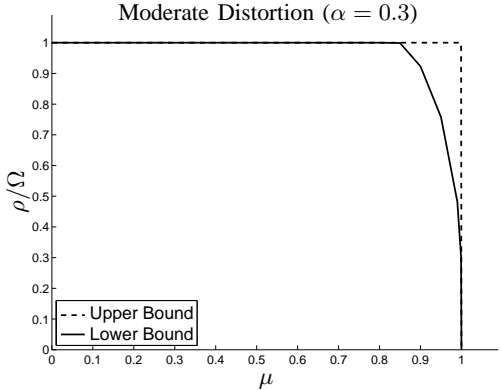, width = .75\columnwidth}
\vspace{-.2cm}
\caption{\label{fig:EX1}
Bounds on the normalized sampling rate $\rho/\Omega$ needed to achieve distortion $\alpha = 0.3$ as a function of the distribution mean $\mu$ when the sparsity rate is $\Omega = 0.35$ and the nonzero signal elements are i.i.d. $\mathcal{N}(\mu,1-\mu^2)$.
}

\end{figure}

\begin{figure}[htbp]
\centering
\psfrag{x1}[]{ \small  $\mu$}
\psfrag{y1}[B]{\small $\rho/\Omega$} 
\psfrag{t1}[Bt]{\small High Distortion ($\alpha = 0.95$)}
\psfrag{(This Paper)}[lT]{\small (This Paper)}
\epsfig{file=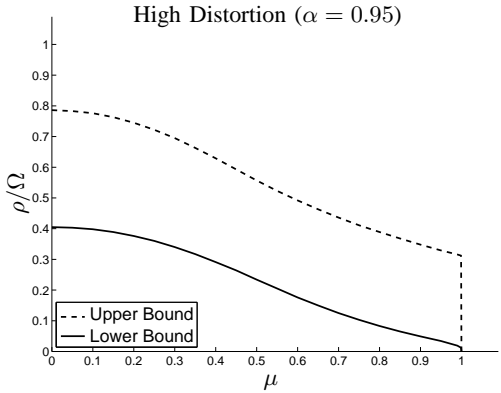, width = .75\columnwidth}
\vspace{-.2cm}
\caption{ \label{fig:EX2} Bounds on the normalized sampling rate $\rho/\Omega$ needed to achieve distortion $\alpha = 0.95$ as a function of the distribution mean $\mu$ when the sparsity rate is $\Omega = 0.35$ and the nonzero signal elements are i.i.d. $\mathcal{N}(\mu,1-\mu^2)$.
}
\end{figure}




\newpage

\section{Proof of Theorem \ref{thm:LB}}\label{sec:proof-LB}

Throughout this proof we use the notation $A$ interchangeably to denote either a particular $m_n \times n$ matrix $A_n$ or a sequence of such matrices $\{A_n\}$. We begin with the following lemma which shows that the sampling rate distortion function can be lower bounded by considering an arbitrary sequence $A$. 


\begin{lemma}
Let $A$ denote any sequence of full rank $\lceil \rho \cdot n \rceil \times n$ sampling matrices. Then, for any distortion $\alpha$, the sampling rate distortion pair $(\rho,\alpha)$ is not achievable for the random source $\mathcal{X}(\Omega,F)$ if 
\begin{align}\label{eq:fano}
\limsup_{n \rw \infty}  {\textstyle \frac{1}{n}}  I(A \bX;\bS |\bB) < H(\Omega) - H(\alpha \Omega).
\end{align}
\end{lemma}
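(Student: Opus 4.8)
The plan is to apply Fano's inequality to the support-recovery problem, conditioned on the basis $\bB$. Fix a sequence of full-rank sampling matrices $A$, and suppose for contradiction that $(\rho,\alpha)$ is achievable along this sequence, i.e. that there is an estimator $\hat{\bs}(A\bX, A, \bB, k)$ with $P_e^{(n)} = \Pr\{d(\bS,\hat{\bS}) > \alpha k\} \to 0$. Since $A$ is deterministic (and $B$ is given to the estimator), the relevant Markov chain is $\bS \to (A\bX,\bB) \to \hat{\bS}$, so $H(\bS \mid A\bX, \bB) \le H(\bS \mid \hat{\bS}, \bB) $ and the data-processing / conditioning bound gives
\begin{align*}
H(\bS\mid A\bX,\bB) \;\ge\; H(\bS\mid\bB) - I(A\bX;\bS\mid\bB) \;=\; \log\binom{n}{k} - I(A\bX;\bS\mid\bB),
\end{align*}
using that $\bS$ is uniform over the $\binom{n}{k}$ supports and independent of $\bB$.

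Next I would upper-bound $H(\bS\mid\hat{\bS},\bB)$ by a Hamming-ball counting argument suited to the distortion criterion. On the event $d(\bS,\hat{\bS})\le\alpha k$ (which has probability $\ge 1-P_e^{(n)}$), the true support $\bS$ lies within Hamming distance $\alpha k$ of $\hat\bS$; the number of such supports is at most $\binom{n}{k}\cdot(\text{something like }\binom{k}{\alpha k}\binom{n-k}{\alpha k})$ — more cleanly, it is bounded by the number of ways to change at most $\alpha k$ of the $k$ chosen coordinates, which by standard entropy estimates is $\exp(n[H(\Omega)-H(\alpha\Omega)] + o(n))$ (this is exactly the combinatorial content of the Proposition~\ref{prop:arb} lower bound, and where the $H(\Omega)-H(\alpha\Omega)$ term comes from). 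A Fano-type argument then yields
\begin{align*}
H(\bS\mid\hat\bS,\bB) \;\le\; P_e^{(n)}\log\binom{n}{k} + n\big[H(\Omega)-H(\alpha\Omega)\big] + o(n) + O(1).
\end{align*}
Combining the two displays, dividing by $n$, and letting $n\to\infty$ with $P_e^{(n)}\to 0$ and $\tfrac1n\log\binom{n}{k}\to H(\Omega)$, all the $H(\Omega)$ terms cancel and we are left with $\liminf_n \tfrac1n I(A\bX;\bS\mid\bB) \ge H(\Omega)-H(\alpha\Omega)$. Contrapositively, if $\limsup_n \tfrac1n I(A\bX;\bS\mid\bB) < H(\Omega)-H(\alpha\Omega)$, then $(\rho,\alpha)$ is not achievable along $A$, which is the claim.

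The step I expect to be the main obstacle is the precise Hamming-ball count tailored to the distortion metric $d(\bs,\hat\bs)=|\bs\cup\hat\bs|-|\bs\cap\hat\bs|$, and checking that the list-decoding version of Fano's inequality (rather than the vanilla one) gives exactly the bound $H(\Omega)-H(\alpha\Omega)$ with vanishing correction terms — in particular handling the fact that $\hat\bS$ need not have size $k$, and bounding the number of size-$k$ supports within distortion $\alpha k$ of a given set uniformly. The other mildly delicate point is being careful that conditioning on $\bB$ is harmless: since $\bB$ is independent of $\bS$, it only helps the estimator and does not change $H(\bS\mid\bB)=\log\binom{n}{k}$, so the argument goes through verbatim with $\bB$ carried along as side information throughout.
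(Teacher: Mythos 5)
Your Fano-with-list-decoding argument is fine in spirit and matches the first half of the paper's (sketched) proof: conditioning on $\bB$, bounding $H(\bS\mid\hat{\bS},\bB)$ by the log of the number of size-$k$ supports within distortion $\alpha k$ of the estimate (the symmetric difference $\bS\,\triangle\,\hat{\bS}$ has size at most $\alpha k$, giving the $\exp\{nH(\alpha\Omega)+o(n)\}$ ball volume and hence the $H(\Omega)-H(\alpha\Omega)$ threshold). The paper itself defers these details to \cite{RG-Lower-Bounds}, so I have no quarrel with that part.

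The genuine gap is in what you conclude. You prove that $(\rho,\alpha)$ is not achievable \emph{along the particular deterministic sequence} $A$, and you even state this as the claim. But the lemma asserts non-achievability in the sense of Definition~\ref{def:rho_a}, i.e.\ that \emph{no} distribution $p_A$ on $\lceil\rho n\rceil\times n$ matrices (together with any estimator) can drive $P_e^{(n)}\to 0$, and this must follow from checking the mutual-information condition for just \emph{one} arbitrary full-rank sequence $A$. The missing ingredient is the second half of the paper's argument: because $\bB$ is Haar-distributed, for any other full-rank matrix $A'$ of the same dimensions there is (via the SVD) an invertible matrix $D$ such that $DA\bX$ is equal in distribution to $A'\bX$ (jointly with $\bS,\bB$ after an orthogonal change of variables in $\bB$), so $I(A\bX;\bS\mid\bB)=I(A'\bX;\bS\mid\bB)$ for every full-rank $A'$; rank-deficient matrices give no more information, and a random $\bA\sim p_A$ is handled by conditioning on its realization. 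Without this invariance step, verifying the hypothesis for your chosen $A$ says nothing about the sampling matrices a scheme designer might actually use, so your contrapositive does not yield the lemma as stated.
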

\begin{proof}[Proof Sketch]
The lower bound for a given sequence $A$ follows from Fano's inequality (see e.g. \cite{RG-Lower-Bounds}). The fact that the bound for one matrix $A$ applies to any other matrix $A'$ (of equal rank) follows from that fact that there exists an invertible matrix $D$ (based on the singular value decomposition) such that $D A\bX$ is equal in distribution to $A' \bX$.
\end{proof}

Next, we upper bound the left hand side of \eqref{eq:fano}. 
Expanding the mutual information for a given problems size $n$ gives
\begin{align*}
I(A \bX; \bS|\bB) &= h(A\bX |\bB) - h(A\bX| \bS,\bB).
\end{align*}
The entropy $h(A \bX |\bB)$ is upper bounded by the entropy of a Gaussian vector with the same covariance as $A \bX$, 
 and thus 
\begin{align*}
h(A \bX |\bB) \le {\textstyle \frac{m}{2}} \log\big(2 \pi e\, \sigma^2_x |{A} {A}^T|^{\frac{1}{m}}\big)
\end{align*}
where $\sigma^2_x = \Omega \sigma_F^2 + \Omega(1-\Omega)\mu_F^2$ is the variance of each element of $\bX$. Furthermore, the entropy $h(A \bX|\bS,\bB)$ is lower bounded by the entropy power inequality \cite{elementsofIT} as
\begin{align*}
h(A \bX |\bS,\bB) \ge {\textstyle \frac{m}{2}} \bE\log\big(2 \pi e\, N(F) |A\bB_\bS \bB^T_\bS A^T|^{\frac{1}{m}}\big)
\end{align*}
where $N(F) =  (2\pi e)^{-1} \exp(2 h(F))$ is the entropy power of each nonzero element of $\bU$. Combining these bounds gives
\begin{align*}
I(A \bX;\bS|\bB) \le {\textstyle \frac{m}{2}} \bE \log\left(\frac{1}{\theta(\Omega,F)} \cdot \frac{|{A} {A}^T|^{\frac{1}{m}}}{ | \frac{1}{\Omega}A\bB_\bS \bB^T_\bS A^T|^{\frac{1}{m}}} \right)
\end{align*}
where we use the fact that $\theta(F,\Omega) = \Omega N(F)/\sigma_x^2$. 

Without any loss of generality, we may assume that the spectral distribution of $AA^T$ converges to a compactly supported probability measure $\mu$ as $n \rw \infty$. Then, $|{A} {A}^T|^{\frac{1}{m}} \rw G_\mu$ as $n \rw \infty$ where
\begin{align*}
G_\mu = \int_\mathbb{R} \log(x) d\mu(x).
\end{align*}

The remaining problem, therefore, is to characterize the spectral distribution of the random matrix $A\bB_\bS \bB^T_\bS A^T$ as $n$ becomes large. To this end, it is convenient to use results from free probability theory which is a theory for non-commutative probability theory developed by Voiculescu \cite{Voicu83}. To begin, observe that the limiting spectral distribution of $A^T A$ has a point mass $\delta_0$ of weight $1-\rho$ at zero and is given by
\begin{align*}
\tilde{\mu} &= (1-\rho) \delta_0 + \rho \mu.
\end{align*}
Observe also, that the limiting spectral distribution of $\bB^T_\bS \bB_\bS$ is given by $$\mu' = (1-\rho/\Omega) \delta_0 + (\rho/\Omega) \delta_1$$
The basic idea from free probability is that the sequences $A^T A$ and $\bB^T_\bS \bB^T_\bS$ are {\em freely independent} and hence the spectral distribution of $\bB^T_\bS A^T A \bB_\bS$ converges to a probability measure that can be described uniquely in terms of $\mu$ and $\mu'$. 

To characterize this measure, we use the following definition. The $R$-transform of a probability measure $\mu$ is given by
\begin{align*}
R_\mu(z) = S_\mu^{-1}(-z) - \frac{1}{z}
\end{align*}
where $S_\mu^{-1}(z)$ denotes the inverse (with respect to the composition of functions) of the Stieltjes transform,
\begin{align*}
S_\mu(z) = \int_{\mathbb{R}} \frac{1}{x-z}d\mu(x).
\end{align*}
The following result follows directly from Section 4.4 of Speicher's lecture on free probability \cite{speicher01}.

\begin{lemma}\label{lem:freeprob}
If the limiting spectral distribution of $A^T A$ is equal to $\tilde{\mu}$, then the limiting spectral distribution of the random matrix $\frac{1}{\Omega} \bB^T_\bS A^TA \bB_\bS$ is equal to $\tilde{\nu}$ almost surely where 
\begin{align}\label{eq:freeprob}
R_{\tilde{\nu}}(z) = R_{\tilde{\mu}}(\Omega\, z).
\end{align}
\end{lemma}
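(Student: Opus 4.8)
The plan is to derive Lemma~\ref{lem:freeprob} from Voiculescu's asymptotic freeness theorem \cite{Voicu83} together with the description of the compression of a self-adjoint element by a free projection, as in Section~4.4 of \cite{speicher01}.

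First I would recast the claim as a statement about limiting $*$-distributions. Since $\bB$ is Haar distributed on the orthogonal group, the support $\bS$ can be taken to be $\{1,\dots,k\}$ without loss of generality, so that $P_n := \bB_\bS\bB_\bS^T$ is a uniformly random rank-$k$ orthogonal projection on $\mathbb{R}^n$ with $\tfrac1n\tr(P_n)=k/n\to\Omega$, while $M_n := A^TA$ is deterministic and, under the standing assumption that the spectral distribution of $AA^T$ converges to a compactly supported measure, has empirical spectral distribution converging to the compactly supported measure $\tilde\mu$. Because $P_n$ is invariant in law under conjugation by any fixed orthogonal matrix --- hence in particular by the eigenbasis of $M_n$ --- Voiculescu's theorem gives that, almost surely, the pair $(M_n,P_n)$ converges in $*$-distribution to $(a,p)$ in a tracial $W^*$-probability space $(\mathcal{M},\tau)$, where $a=a^*$ has distribution $\tilde\mu$, $p$ is a projection with $\tau(p)=\Omega$, and $a,p$ are free. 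Identifying $\mathbb{R}^k$ unitarily with the range of $P_n$, the matrix $\bB_\bS^TM_n\bB_\bS$ is the compression $P_nM_nP_n$ restricted to that range, so its empirical spectral distribution converges almost surely to the distribution of $pap$ in the compressed algebra $(p\mathcal{M}p,\tfrac1\Omega\tau)$.

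Next I would invoke the free-compression identity: when $a$ is free from a projection $p$ of trace $t$, the element $pap$ of the compressed, trace-normalized algebra has free cumulants $\kappa_n\mapsto t^{\,n-1}\kappa_n(a)$, equivalently $R$-transform $R_{pap}(z)=R_a(tz)$, with law $\mathbb{D}_t(\tilde\mu^{\boxplus 1/t})$. Specializing to $t=\Omega$ and then carrying the remaining scaling through --- the dilation by $1/\Omega$ relating $\bB_\bS^TM_n\bB_\bS$ to $\tfrac1\Omega\bB_\bS^TM_n\bB_\bS$, using $R_{cX}(z)=c\,R_X(cz)$ --- identifies the limiting measure $\tilde\nu$ and yields the stated relation between $R_{\tilde\nu}$ and $R_{\tilde\mu}$.

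The part I expect to be the main obstacle is making the almost-sure asymptotic freeness fully rigorous in this setting: verifying the moment hypotheses of Voiculescu's theorem (here supplied by $\|P_n\|\le1$ and $\sup_n\|M_n\|<\infty$, which follow from the compact-support assumption), upgrading convergence of mixed moments to weak convergence of the spectral distribution, tracking the atom of $\tilde\mu$ at $0$ and the atom of weight $1-\rho/\Omega$ carried by $\tilde\nu$ when $\rho<\Omega$ so that the $R$-transform manipulations remain valid, and controlling the exceptional null set uniformly over the random support $\bS$. Once these are in place, the conclusion is precisely the compression formula of Section~4.4 of \cite{speicher01}.
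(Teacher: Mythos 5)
Your route is the one the paper intends: its entire ``proof'' is the citation to Section~4.4 of Speicher's notes, and your expansion---almost sure asymptotic freeness of the deterministic sequence $A^TA$ and the Haar-invariant projection $\bB_\bS\bB_\bS^T$ of normalized trace tending to $\Omega$, followed by the free-compression formula $\kappa_n\mapsto \Omega^{n-1}\kappa_n$, i.e.\ $R_{pap}(z)=R_a(\Omega z)$ in the compressed algebra---is exactly how that citation gets filled in, and the technical caveats you flag (uniform norm bounds, the atom at zero, the null sets) are the right ones.

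The problem is the last step, where you assert that carrying the remaining $1/\Omega$ dilation through ``yields the stated relation.'' It does not: by your own (correct) rules, the compression gives $R(z)=R_{\tilde\mu}(\Omega z)$ for the limit law of $\bB_\bS^T A^TA\bB_\bS$ \emph{without} the prefactor, and then $R_{cX}(z)=c\,R_X(cz)$ with $c=1/\Omega$ gives $R_{\tilde\nu}(z)=\Omega^{-1}R_{\tilde\mu}(z)$ (i.e.\ $\tilde\nu=\tilde\mu^{\boxplus 1/\Omega}$) for the matrix $\frac{1}{\Omega}\bB_\bS^T A^TA\bB_\bS$ that actually appears in the lemma. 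These two candidate laws differ by a dilation by $\Omega$, so the lemma as printed cannot be recovered verbatim: either the factor $1/\Omega$ must be dropped from the matrix, or \eqref{eq:freeprob} must be replaced by $R_{\tilde\nu}(z)=\Omega^{-1}R_{\tilde\mu}(z)$. A first-moment check against the Gaussian/Mar\u{c}enko--Pastur instantiation used later in the proof of Theorem~\ref{thm:LB} settles which version is intended: writing $A\bB_\bS=\tfrac{1}{\sqrt n}G$ with $G$ an $m\times k$ standard Gaussian matrix, the matrix $\frac{1}{\Omega}\bB_\bS^TA^TA\bB_\bS=\frac{1}{k}G^TG$ has limiting mean $\rho/\Omega$ and limiting law $(1-\rho/\Omega)\delta_0+(\rho/\Omega)\,\mathrm{MP}_{\rho/\Omega}$, consistent with $R_{\tilde\nu}(z)=\Omega^{-1}R_{\tilde\mu}(z)$ (since here all free cumulants of $\tilde\mu$ equal $\rho$), whereas $R_{\tilde\mu}(\Omega z)$ would force mean $\rho$. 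So your machinery is right, but a complete proof must perform this scaling computation explicitly and state and prove the correctly normalized identity (noting the misprint), rather than asserting that the printed relation follows from the compression formula.
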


From Lemma \ref{lem:freeprob} we conclude that spectral density of $A\bB_\bS \bB^T_\bS A^T$ converges to $\nu$ as $n \rw \infty$, where
$$ \nu = (\rho/\Omega -1) \delta_0  + (\Omega/\rho) \tilde{\nu}.$$ If $\nu$ is compactly supported then $ | \frac{1}{\Omega}A\bB_\bS \bB^T_\bS A^T|^{\frac{1}{m}}  \rw G_{\nu} $ almost surely as $n \rw \infty$. Thus we conclude that
\begin{align}
\limsup_{ n \rw \infty} I(A \bX;\bS|\bB) \le {\textstyle \frac{m}{2}} \log\left(\frac{1}{\theta(\Omega,F)} \cdot \frac{G_\mu}{G_\nu}\right)
\end{align}
almost surely for any compactly supported probability measures $\mu,\nu$ that satisfy Equation \eqref{eq:freeprob}. 

Although the strongest bound corresponds to the minimization over $\mu$, such optimization appears to be difficult. Instead, we obtain a (potentially suboptimal) bound by setting $\mu$ equal to the Mar\u{c}enko-Pastur law \cite{MarcenkoPastur67} with parameter $\rho$, i.e.
 \begin{align*}
d\mu(x) = \frac{ \sqrt{(x-a)(b-x)}}{2 \pi \rho x}
\end{align*}
for all $x \in [a,b]$ where $a = (1-\sqrt{\rho})^2$ and $b= (1+\sqrt{\rho})^2$. Then, it can be shown that \eqref{eq:freeprob} is satisfied when $\nu$ is equal to the Mar\u{c}enko-Pastur law with parameter $\rho/\Omega$. Integrating with respect to these measures shows that
\begin{align*}
G_\mu &= e^{-1} \Delta(\rho)\\
G_\nu &= e^{-1} \Delta(\rho/\Omega)
\end{align*}
which completes the proof. 

We remark that convergence of spectral density to the Mar\u{c}enko-Pastur law corresponds to the setting where the elements of $A$ are i.i.d. zero mean Gaussian. Interestingly, it is possible to use the rotational invariance of the Gaussian distribution to obtain the same bound given above, without appealing to free probability. However, the approach taken above is more general and allows the calculation of the bound in terms of other limiting distributions. 
\section{Discussion}
Two insights from the field of compressed sensing are that any sparse vector can be sampled efficiently using linear projections, and that there exist random sampling matrices that good almost surely for any sparse basis. In this paper, we have investigated what happens if a probability measure is placed on the set of possible vectors and partial recovery is allowed by bounding the sampling rate distortion function $\rho(\alpha)$. In certain cases, we showed that the number of samples may be decreased. However, we also showed that in many cases, no reduction is possible, particularly if one requires universality with respect a sparse basis. 
%

%

\section*{Acknowledgment}
This work was supported in part by ARO MURI No. W911NF-06-1-0076.

\bibliographystyle{ieeetran}
\bibliography{gbib-long}

\end{document}